\pgfplotsset{compat=1.18}
\newcommand{\card}[1]{\textrm{cardinality}(#1)}
\newcommand{\conv}[1]{\textrm{conv}(#1)}
\newcommand{\MVEE}[1]{\textrm{MVEE}(#1)}
\renewcommand{\det}[1]{\textrm{det}(#1)}
\newcommand{\fh}[0]{\hat{f}}
\newcommand{\gh}[0]{\hat{g}}
\newcommand{\xh}[0]{\hat{x}}
\newcommand{\Xh}[0]{\hat{X}}
\newcommand{\yh}[0]{\hat{y}}
\newcommand{\wh}[0]{\hat{w}}
\newcommand{\Wh}[0]{\hat{W}}
\newcommand{\vh}[0]{\hat{v}}
\newcommand{\mathcalWh}[0]{\hat{\mathcal{W}}}
\newcommand{\mathcalVh}[0]{\hat{\mathcal{V}}}
\newcommand{\thetah}[0]{\hat{\theta}}
\newcommand{\Thetah}[0]{\hat{\Theta}}
\newcommand{\gammab}[0]{\bar{\gamma}}
\newcommand{\gammah}[0]{\hat{\gamma}}
\newcommand{\Gammah}[0]{\hat{\Gamma}}
\newcommand{\xb}[0]{\bar{x}}
\newcommand{\Xb}[0]{\bar{X}}
\newcommand{\wb}[0]{{\bar{w}}}
\newcommand{\Wb}[0]{{\bar{W}}}
\newcommand{\thetab}[0]{{\bar{\theta}}}
\newcommand{\R}[1]{\mathbb{R}^{#1}}
\newcommand{\justR}[0]{\mathbb{R}}
\DeclareMathOperator*{\argmin}{argmin}
\newcommand{\ie}[0]{i.\,e.\xspace}
\newcommand{\st}[0]{s.\,t.\xspace}
\newtheorem{theorem}{Theorem}[section]
\newtheorem{lemma}[theorem]{Lemma}
\newtheorem{proposition}{Proposition}[section]
\theoremstyle{definition}
\newtheorem{definition}{Definition}[section]
\newtheorem{fact}{Fact}[section]
\theoremstyle{plain}
\theoremstyle{remark}
\newtheorem*{remark}{Remark}
\begin{document}

\title{\LARGE\bf Optimistic vs Pessimistic Uncertainty Model Unfalsification}

\author{Jannes H\"uhnerbein$^{1}$, Jad Wehbeh$^{2}$ and Eric C. Kerrigan$^{3}$
\thanks{$^{1}$Jannes H\"uhnerbein is with the Chair of Astrodynamics, Technical University of Munich, 85521 Ottobrunn, Germany
        {\tt\small jannes.huehnerbein@tum.de}}%
\thanks{$^{2}$Jad Wehbeh is with the Department of Electrical and Electronic Engineering, Imperial College London, SW7 2AZ, UK
        {\tt\small j.wehbeh22@imperial.ac.uk}}%
\thanks{$^{3}$Eric C. Kerrigan is with the Department of Electrical and Electronic Engineering and the Department of Aeronautics, Imperial College London, SW7 2AZ, UK
        {\tt\small e.kerrigan@imperial.ac.uk}}%
}

\maketitle

\begin{abstract}
We present a novel, input-output data-driven approach to uncertainty model identification.
As the true bounds and distributions of system uncertainties ultimately remain unknown, we depart from the goal of identifying the uncertainty model and instead
look for minimal concrete statements that can be made based on an uncertain system
model and available input-output data. We refer to this as unfalsifying an uncertainty model.
Two different unfalsification approaches are taken.
The optimistic approach determines the smallest uncertainties that could
explain the given data, while the pessimistic approach finds the largest possible uncertainties suggested by the data.
The pessimistic problem is revealed to be a semi-infinite program, which is solved using the local reduction algorithm. It is also shown that the optimistic and pessimistic approaches to uncertainty model unfalsification are mathematical duals. Finally, both approaches are tested using an uncertain linear model with data from a simulated nonlinear system.
\end{abstract}

\section{Introduction}
\subsection{Motivation}
In classical control theory, controllers are initially designed for nominal systems without considering uncertainties or disturbances.
The problem of controller synthesis for nominal, undisturbed systems is a well-studied one. In many cases, a nominal controller features sufficient stability margins, such that it works even in the presence of uncertainties and disturbances.
However, if robustness is a more critical concern,
the controller is modified based on an uncertainty model to ensure stability and, optionally, constraint satisfaction despite worst-case disturbances.
Therefore, it is of high importance to have an understanding of disturbances and uncertainties that can arise in a given system.

Data-driven approaches are commonly used to identify at least an approximate uncertainty model.
However, as the~\emph{true} nature of uncertainties ultimately remains unknown, the notion of~\emph{identifying} an uncertainty
model is misleading.
No amount of data, no matter how large, is able to fully reflect system uncertainty. If that were the case, it would not be uncertain. For this reason, we look for ways to~\emph{unfalsify} an uncertainty model instead.

\subsection{Related Work}
The term~\emph{unfalsification} originates from~\emph{Logic of Scientific Discovery} by Austrian-British philosopher Karl Popper~\cite{Popper:1935}.
In his work, Popper calls for~\emph{falsifiability} as a necessary property for any acceptable scientific hypothesis.
Popper's work can be understood as a proposal to solve David Hume's~\emph{Problem of Induction}~\cite{Hume:1739} by postulating that scientific theories are only acceptable if they are falsifiable, \ie~if they can be disproven.
All acceptable scientific hypotheses that have not yet been falsified are not~\emph{validated}, but merely~\emph{unfalsified}.

In the context of this paper, model unfalsification refers to the process of systematically challenging mathematical models of dynamical systems based on available data to gain confidence in their validity.

\subsubsection{Model Unfalsification for Dynamical Systems Modeling}
The problem of (uncertainty) model (un-)falsification was first formalized for dynamical systems and
control by~\cite{PoollaModelValidation:1994} and~\cite{SafonovUnfalsification:1995}.
In the following, we briefly summarize their work.

\paragraph{Optimal vs Feasible Parameters}
As~\cite{KosutNonlinearUncertaintyModelUnfalsification:1997} put it, model unfalsification is basically
identical to classical system identification, but with the goal of~\emph{feasibility} instead of~\emph{optimality}.
It is important to note that system identification problems are not always set up in a way that enables the model to
recreate the given data exactly.
This distinction separates system identification from model unfalsification.
System identification strives for~\emph{optimality} in some sense, whereas unfalsification only
requires~\emph{feasibility},
but is more strict regarding the~\emph{exact} reproducibility of the data.

\paragraph{Extensions of the Idea}
The previous paragraphs make it seem like model unfalsification is always a binary problem, where a model is either falsified or unfalsified.
While this is true in principle, the idea can be extended to also include secondary conditions and objectives.
One way of doing so is to see model unfalsification as a special case of system identification that requires the model to be exact~\emph{and} in some sense
optimal~\cite{KosutUncertaintyModelUnfalsificationSysID:1995, KosutAdaptiveRobustControl:1996, KosutUnfalsificationRobustControl:2001}.

\subsubsection{Recent Contributions to the Subject}
The majority of publications on model unfalsification date from the late 1990s and early 2000s.
It was during this period that the concept, previously regarded as rather philosophical,
was formally established in control and optimization theory.
Since then, fewer publications have specifically mentioned the problem, which served as motivation for this paper. 
Nonetheless, a few noteworthy papers from recent years are to be discussed here.

A direct application of Popper's original philosophy to the example of numerical 
building modeling is given 
by~\cite{ForcelliniFalsificationValidationNumericalModels:2023} and the authors 
offer a pointed critique of the way data is currently used to inform mathematical models.
The work of~\cite{Rasul:2024} is more directly applied to optimal control and is in some ways similar to what is proposed in this paper.
They use an unfalsification approach to obtain optimal estimates for unknown parameters
of an autonomous underwater vehicle. However, they do not formulate or solve the uncertainty
model unfalsification problem as an optimization problem.
A very similar approach is taken by~\cite{Furieri:2023}, however, they solely rely on output feedback and noisy (input-output) data.
While~\cite{Furieri:2023} solve this problem via optimization, they do not consider the importance of
exact feasibility of the problem, and instead propose the use of a standard least-squares system
identification approach.
Various data-enabled predictive control techniques, such as~DeePC~\cite{Coulson:2019, Berberich:2021, Gao:2024, Xie:2024},
address this issue, but often rely on input-state data and sometimes neglect measurement noise.

\subsubsection{Semi-Infinite Programming \& Local Reduction}\label{ch:SIPLocalReductionReview}
Semi-infinite programs~(SIPs) that are also considered in this paper consist of a finite number of optimization variables and an infinite number of constraints; hence the name~\emph{semi-infinite}.
The theory of~SIP is described in more depth by~\cite{Hettich:1993}, who also provides applied examples with basic solution approaches.

A more general solution approach to SIPs is given by~\cite{Blankenship:1976}. They introduce 
the~\emph{local reduction} algorithm as a sequential approach of solving SIPs using subsets of 
constraints with finite cardinality.
This idea was extended by~\cite{ZagorowskaLocalReduction:2023} to a scenario-based approach that 
aims at finding worst-case scenarios that reduce the infinite number of constraints in the problem 
to a finite set of constraints that govern the ``worst case''.

These kinds of scenario-based approaches are not entirely new and were previously presented in the context of probabilistic solution techniques for robust
controller design~\cite{CalafioreScenarioRobustControl:2006, Grammatico:2014}. When solving~SIPs, however, new scenarios are not generated
at random but in a strategic way, such that the algorithm terminates at a local solution. This is referred to as~\emph{automatic scenario generation} by~\cite{ZagorowskaLocalReduction:2023}.

Another notable contribution to local reduction is made by~\cite{WehbehSemiInf:2024} by accounting for existence constraints and non-unique optimization variable trajectories.

\subsection{Contributions}
The main contribution of this paper is a novel optimization-based framework for data-driven uncertainty model unfalsification.
Two formulations of this problem --- \emph{optimistic} and \emph{pessimistic} --- are proposed, each providing a different perspective on disturbances and uncertainties, recognizing that uncertainty models can never fully capture true system uncertainty.
The framework also critiques conventional methods that either impose overly conservative bounds, sacrificing performance, or rely on unjustified assumptions (such as Gaussian noise).

Both the optimistic and pessimistic uncertainty model unfalsification formulations were tested
by applying an uncertain linear model formulation to nonlinear simulation data.

\section{Problem Formulation}\label{ch:ProblemFormulation}
Before the topic is discussed in detail, the most general model unfalsification problem is formulated without any further assumptions.
This formal definition is then used as the basis for several subproblems and derivations.

Assume a very general uncertain discrete-time model of a physical plant given by
\begin{subequations}\label{eq:MostGeneralModel}
    \begin{align}
        0 &= \fh(\xh_{k+1}, \xh_k, u_k, \wh_k, \thetah)\\
        \yh_{k} &= \gh(\xh_k, u_k, \wh_k, \thetah)\\
        \wh_k &\in \mathcalWh(\gammah),
    \end{align}
\end{subequations}
where~$\fh$ is a discrete-time model of the state dynamics~$\xh_k$,~$\gh$ models the measurements~$\yh_k$, $\thetah$ denotes the model parameters,~$\wh_k$ denotes the model uncertainties according to some uncertainty model~$\mathcalWh$ and $u_k$ represents the system inputs.

\begin{definition}[Model Unfalsification]\label{def:ModelUnfalsification}
    The model~\eqref{eq:MostGeneralModel} is said to be unfalsified by the~$N$-point data set~$(U, Y)\coloneq ((u_0,\dots,u_{N-1}), (y_0,\dots,y_{N-1}))$ if and only if there exists
    a parameter configuration~$\thetah\in\Thetah$ and an~$N$-point sequence of
    disturbances~$W = \{\wh_i\sim\mathcalWh(i, \gammah) \;\colon\; i = 0, 1,\dots,N-1\}$ with some~$\gammah\in\Gammah$,
    such that all prediction errors $e_k \coloneq y_k - \yh_k, k = 0,\dots,N-1$ are zero.
\end{definition}

The problem of uncertainty model unfalsification in its purest form can be formulated as the feasibility problem
\begin{subequations}\label{eq:UnfalsificationOptFeasibility}
    \begin{align}
        \text{find}\quad & \xb_0, \dots, \xb_{N-1}, \wb_0, \dots, \wb_{N-1}, \thetab, \gammab\\
        \text{\st}\quad & 0 = \fh(\xb_{i+1}, \xb_i, u_i, \wb_i, \thetab),\quad\; i = 0,\dots,N-2\\
                             & y_i = \gh(\xb_i, u_i, \wb_i, \thetab),\quad\; i = 0,\dots,N-1\\
                             & \wb_i\in\mathcalWh(\gammab),\quad\; i = 0,\dots,N-1\\
                             & \thetab\in\Thetah(U, Y),\\
                             & \gammab\in\Gammah(U, Y),
    \end{align}
\end{subequations}
which can only be solved if there exists a series of
uncertainties~$\Wb^* = (\wb^*_0, \dots, \wb^*_{N-1})$ in accordance with the
uncertainty model~$\mathcalWh$ that results in an~\emph{exact} match between
the given data~$(U, Y)$ and the model~$(\fh, \gh, \mathcalWh)$. In other words, the
model is~\emph{unfalsified} for the parameters $(\thetab^*, \gammab^*)$.

For simplicity, it is assumed that~$\mathcalWh(\gammab)$ is a time-invariant parametrized set with parameters~$\gammab$. However, more complex uncertainty models are also possible.

Then, the basic concept of uncertainty model unfalsification is developed further to allow
for optimization over the uncertainty model~$\mathcalWh(\gammab)$.
As a consequence, two general optimization problems can be
formulated: the ``optimistic'' and the ``pessimistic'' uncertainty model
unfalsification problem.

In the~\emph{optimistic} case, the optimal uncertainty model~$\mathcalWh(\gammab^*)$
is computed by minimizing some uncertainty metric
function~$r(\gammab)$. This yields the optimistic problem formulation
\begin{subequations}\label{eq:GeneralOptimisticUnfalsification}
    \begin{alignat}{3}
        \min_{\Wb, \thetab, \gammab}\quad & r(\gammab)\\
        \text{\st}\quad & \Wb \in\Omega_\thetab(\fh, \gh, U, Y)\\
                             & \wb_i\in\mathcalWh(\gammab),                    &&\quad i = 0,\dots,N-1\\
                             & \thetab\in\Thetah(U, Y),\\
                             & \gammab\in\Gammah(U, Y),
    \end{alignat}
\end{subequations}
where~$\Omega_\thetab(\fh, \gh, U, Y)$ denotes the set of permissible uncertainty trajectories given the model, parameters and data set
\begin{align}\label{eq:Omega}
    \Omega_\thetab(\fh, \gh, U, Y) &\coloneqq\nonumber\\
    \{\Wh\;\colon\; \exists \Xh&\subseteq\mathcal{X}\;\vert\; 0 = \fh(\xh_{k+1}, \xh_k, u_k, \wh_k, \thetab),\nonumber \\
    y_k = \gh(&\xh_k, u_k, \wh_k, \thetab), k = 0, \dots, N-1\}.
\end{align}

The solution of problem~\eqref{eq:GeneralOptimisticUnfalsification}
consists of the uncertainty trajectory~$\Wb^*$ that results in the optimal
uncertainty model~$\mathcalWh(\gammab^*)$ with respect to the uncertainty
metric function~$r(\gammab)$.
The optimal state trajectory~$\Xb^*$ is computed as a byproduct since it
is needed for computing the set~$\Omega_\thetab$.

Since only a single, but not necessarily unique, optimal uncertainty trajectory~$\Wb^*$ is considered in this problem formulation,
it tends to result in very optimistic uncertainty models; hence the name.

Conversely, a~\emph{pessimistic} approach can be taken for the same underlying problem.
In this case the goal is not to determine~\emph{one} uncertainty trajectory that fits the model, but to make sure that the optimal uncertainty model captures~\emph{all} possible uncertainty trajectories allowed by the given data~$(U, Y)$ and the model~$(\fh, \gh)$.
This is achieved by adding a ``for all'' constraint to the optimization problem resulting in the following semi-infinite optimization program (SIP)
\begin{subequations}\label{eq:GeneralPessimisticUnfalsification}
    \begin{align}
        \min_{\substack{\gammab\in\Gammah(U, Y),\\\thetab\in\Thetah(U, Y)}}\quad & r(\gammab)\\
        \text{\st}\quad & W \subseteq \mathcalWh(\gammab),\, \forall W\in\Omega_\thetab(\fh, \gh, U, Y),
    \end{align}
\end{subequations}
where the set~$\Omega_\thetab$ represents the set of all uncertainty trajectories that are permissible under the given model and data set for a parametric configuration~$\thetab$.

\section{Theoretical Analysis}
\subsection{Optimistic Uncertainty Model Unfalsification}
\paragraph{Non-Uniqueness of Solutions}
The optimistic uncertainty model unfalsification determines a ``minimum-volume''
set~$\mathcalWh(\gammab^*)$ with respect to the given ``volume''
function~$r(\gammab)$
enclosing all elements of~\emph{a} permissible uncertainty trajectory~$\Wb^*$.
Hence, the uncertainty realization~$\Wb^*$ and the model parameters~$(\gammab^*, \thetab^*)$ leading to the minimum-volume enclosing set~$\mathcalWh(\gammab^*)$ are not necessarily unique. In practice, non-unique solutions mostly arise in very small problems.
\paragraph{Monotonicity of the Optimal Objective Value}
The number of constraints in problem~\eqref{eq:GeneralOptimisticUnfalsification} is driven
by the number of data points~$(U, Y)$. Hence, it is worth investigating how the optimistic
uncertainty model~$\mathcalWh(\gammab^*)$ and the optimal objective
value~$r^* \coloneqq r(\gammab^*)$ behave when more data is added.
\begin{proposition}
Given a data set~$(U, Y)$ and a solution to the optimistic problem~\eqref{eq:GeneralOptimisticUnfalsification}
with optimal value $r^*$,
a larger data set $(U_+, Y_+)$ with~$U \subset U_+$
and $Y \subset Y_+$ will lead to an optimal value $r^*_+$
that is larger or equal to the original optimal
objective value, i. e. $r^* \leq r^*_+$.
\end{proposition}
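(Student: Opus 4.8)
The plan is to argue that enlarging the data set can only shrink the feasible region of the optimistic problem~\eqref{eq:GeneralOptimisticUnfalsification}, so that a minimisation over it can only return a value that is larger or equal. The underlying observation is that the equality constraints tied to the original $N$ points reappear among the constraints of the enlarged problem: since $U \subset U_+$ and $Y \subset Y_+$, every original input--output pair is also present in the larger problem, and, as is natural in this dynamical-systems setting, the enlarged trajectory extends the original one by appending further time steps so that the original indices remain contiguous.

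First I would take an arbitrary point $(\Wb_+, \thetab_+, \gammab_+)$ that is feasible for the enlarged problem. By feasibility, $\Wb_+ \in \Omega_{\thetab_+}(\fh, \gh, U_+, Y_+)$, each component of $\Wb_+$ lies in $\mathcalWh(\gammab_+)$, and $\thetab_+ \in \Thetah(U_+, Y_+)$, $\gammab_+ \in \Gammah(U_+, Y_+)$. Invoking the definition~\eqref{eq:Omega} of $\Omega$, membership of $\Wb_+$ furnishes a state trajectory satisfying the dynamics and measurement equations over all enlarged indices. I would then restrict both this state trajectory and $\Wb_+$ to the indices of the original $N$ points, keeping $\thetab_+$ and $\gammab_+$ unchanged. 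Because the retained dynamics and measurement equations are precisely those defining $\Omega_{\thetab_+}(\fh, \gh, U, Y)$, the restricted uncertainty trajectory lies in that set, and each retained component still lies in $\mathcalWh(\gammab_+)$.

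The step I expect to be the main obstacle is checking that $(\thetab_+, \gammab_+)$ remains admissible for the original problem, \ie that $\thetab_+ \in \Thetah(U, Y)$ and $\gammab_+ \in \Gammah(U, Y)$, because the admissible parameter sets are permitted to depend on the data. The clean resolution is to invoke the natural monotonicity $\Thetah(U_+, Y_+) \subseteq \Thetah(U, Y)$ and $\Gammah(U_+, Y_+) \subseteq \Gammah(U, Y)$, which expresses that additional data can never enlarge the set of parameters consistent with it; when these sets do not depend on the data, the inclusions hold trivially. Granting this, the restricted point is feasible for the original problem.

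Finally, since the objective $r(\gammab)$ depends only on $\gammab$, the constructed point gives $r^* \leq r(\gammab_+)$ for every feasible $(\Wb_+, \thetab_+, \gammab_+)$ of the enlarged problem. Taking the infimum over all such points yields $r^* \leq r^*_+$. If instead the enlarged problem is infeasible, then $r^*_+ = +\infty$ and the inequality holds trivially, so the claimed monotonicity $r^* \leq r^*_+$ holds in all cases.
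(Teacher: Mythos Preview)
Your proof is correct and follows essentially the same approach as the paper's: enlarging the data set can only shrink the feasible region of the optimistic problem, so the minimum can only increase. Your argument is in fact more careful than the paper's short proof, since you make the restriction of trajectories to the original indices explicit and you flag the monotonicity requirement on the data-dependent sets $\Thetah(U,Y)$ and $\Gammah(U,Y)$, a point the paper's proof glosses over.
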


\begin{proof}
Adding data points to the data set~$(U, Y)$ adds constraints to the set of permissible uncertainty trajectories~$\Omega$, and previously permissible trajectories may no longer be allowed.
The resulting set~$\Omega_+$ is smaller or equal to the original set,~\ie $\Omega_+ \subseteq \Omega$. Hence, the feasible region of problem~\eqref{eq:GeneralOptimisticUnfalsification} either remains the same,
if all previously permissible uncertainty trajectories remain permissible, or the feasible region shrinks.
As a result, the optimal objective value~$r^*$ also either remains the same or increases if the added data
rules out solutions that were previously feasible.
\end{proof}

\subsection{Pessimistic Uncertainty Model Unfalsification}
\paragraph{Necessary Conditions for Feasibility}
The ``for all'' constraint in the pessimistic uncertainty model unfalsification
problem~\eqref{eq:GeneralPessimisticUnfalsification} can only be satisfied if
the set~$\Omega$ is bounded. Hence, this is a necessary condition for feasibility.
In order to formalize the relationship between the model~$(\fh, \gh)$, the given data~$(U, Y)$ and the
bounds of the set~$\Omega$, we introduce the concept of~\emph{bounded uncertainty inferability}.

\begin{definition}\label{def:BoundedUncertaintyInferability}
If the set of all permissible uncertainty trajectories~$\Omega$ defined
in~\eqref{eq:Omega} is bounded for some data
set~$D \coloneqq (U, Y)$,~\emph{bounded uncertainty} is~\emph{inferable} from
the data $D$ for the model~$(\fh, \gh)$.
\end{definition}

Without specific knowledge about~$(\fh, \gh)$ and $(U, Y)$ it is difficult
to determine whether bounded uncertainty inferability holds.
A more concrete statement can be made for linear systems, which will be presented later on in section~\ref{ch:LinearSystems}.

From here on, we assume that bounded uncertainty inferability is satisfied and the feasibility of the pessimistic uncertainty unfalsification problem is given.

\paragraph{Strict Order Reflection}
Assume that the uncertainty metric function~$r(\gamma)$ for the uncertainty model~$\mathcalWh(\gamma)$ satisfies~\emph{monotonicity}
\begin{align}\label{eq:GrowingSets}
    \mathcalWh(\gamma_1) \subseteq \mathcalWh(\gamma_2)\;\implies\;r(\gamma_1) \leq r(\gamma_2),
\end{align}
\ie~if regions are added to a set its uncertainty metric function must also increase.
This condition is automatically satisfied if~$r(\gamma)$ is a valid set measure 
of~$\mathcalWh(\gamma)$.

If the converse is true as well, namely the uncertainty metric function~$r(\gamma)$ is strictly order-reflecting with respect
to the set~$\mathcalWh(\gamma)$,~\ie
\begin{align}
    r(\gamma_1) \leq r(\gamma_2)\;\implies\;
    \mathcalWh(\gamma_1) \subseteq \mathcalWh(\gamma_2),\,\forall\gamma_1, \gamma_2 \in\Gamma,
\end{align}
then \eqref{eq:GeneralPessimisticUnfalsification} can be written as an equivalent min-max problem
\begin{subequations}\label{eq:GrowingSetsOptProb}
    \begin{align}
        \min_{\substack{\gammab\in\Gammah(U, Y)\\\thetab\in\Thetah(U, Y)}}\max_{\Wb}\quad & r(\gammab)\\
        \text{\st}\quad    & \Wb\in\Omega_\thetab(\fh, \gh, U, Y)\\
                                & \wb_i \in\mathcalWh(\gammab),\quad i = 0,\dots,N-1,
    \end{align}
\end{subequations}
since the solution is fully determined by a single ``worst-case'' uncertainty trajectory.

This problem poses significant challenges of its own, but has a finite number of constraints and optimization variables. Therefore, no SIP solving algorithms need to be used.

\begin{proposition}\label{prop:InverseMonotonicity}
    If~$(\fh, \gh)$ and~$(U, Y)$ fulfill bounded uncertainty inferability as defined in
    Definition~\ref{def:BoundedUncertaintyInferability} and condition~\eqref{eq:GrowingSets} is met,
    problem~\eqref{eq:GrowingSetsOptProb} is equivalent to problem~\eqref{eq:GeneralPessimisticUnfalsification}.
\end{proposition}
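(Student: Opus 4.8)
The plan is to prove that the two programs share the same feasible set in the variables $(\gammab,\thetab)$ and the same objective $r(\gammab)$, so that their optimal values and minimizers coincide. Both \eqref{eq:GeneralPessimisticUnfalsification} and \eqref{eq:GrowingSetsOptProb} minimize $r(\gammab)$ over $\gammab\in\Gammah(U,Y)$ and $\thetab\in\Thetah(U,Y)$; the only structural difference is that \eqref{eq:GeneralPessimisticUnfalsification} enforces containment through the semi-infinite clause $\wb_i\in\mathcalWh(\gammab)$ for \emph{all} $\Wb\in\Omega_\thetab(\fh,\gh,U,Y)$, whereas \eqref{eq:GrowingSetsOptProb} enforces it through an inner maximization over $\Wb$, which I read as selecting the admissible trajectory that is hardest to enclose. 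I would therefore fix $\thetab$, introduce the union of all admissible uncertainty values $P_\thetab\coloneqq\{\wb_i\;\colon\;\Wb\in\Omega_\thetab,\;i=0,\dots,N-1\}$, and observe that $(\gammab,\thetab)$ is feasible for \eqref{eq:GeneralPessimisticUnfalsification} exactly when $P_\thetab\subseteq\mathcalWh(\gammab)$.

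The easy inclusion is that any $(\gammab,\thetab)$ feasible for \eqref{eq:GeneralPessimisticUnfalsification} is feasible for \eqref{eq:GrowingSetsOptProb}: since every admissible trajectory, and in particular the inner-maximizing one, has all its components in $\mathcalWh(\gammab)$, the worst-case constraint of \eqref{eq:GrowingSetsOptProb} is met at the same objective value $r(\gammab)$. Bounded uncertainty inferability (Definition~\ref{def:BoundedUncertaintyInferability}) enters here to guarantee that $\Omega_\thetab$ is bounded, so that the inner maximization is well posed and a worst-case trajectory (or a maximizing sequence) actually exists; without this the max in \eqref{eq:GrowingSetsOptProb} could be unattained and the reformulation ill defined.

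The reverse inclusion is the crux. Starting from a point feasible for \eqref{eq:GrowingSetsOptProb}, I only know that the single worst-case trajectory returned by the inner maximization lies in $\mathcalWh(\gammab)$, and I must upgrade this to $P_\thetab\subseteq\mathcalWh(\gammab)$. The idea is to let $\gammab^\star$ realize the minimal metric value enclosing the worst case and then invoke monotonicity~\eqref{eq:GrowingSets}: because $r$ is non-decreasing under set inclusion, the worst-case trajectory is precisely the one that maximizes the minimal enclosing value $\rho(\Wb)\coloneqq\min\{r(\gammab)\;\colon\;\wb_i\in\mathcalWh(\gammab),\,i=0,\dots,N-1\}$ over $\Omega_\thetab$, so that $r(\gammab^\star)=\max_{\Wb}\rho(\Wb)\ge\rho(\Wb')$ for every competing $\Wb'$, which I then want to turn into $\Wb'\subseteq\mathcalWh(\gammab^\star)$. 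I expect this step to be the main obstacle: the conclusion is immediate when the family $\{\mathcalWh(\gammab)\}$ is totally ordered by inclusion, but with \eqref{eq:GrowingSets} alone one must rule out the possibility that two trajectories demand incomparable enclosing sets whose union is strictly larger than either, so that dominance in the scalar metric $r$ fails to transfer to set containment along $\gammab^\star$. Closing this gap is exactly where the monotonicity hypothesis has to be used in full, and is the step I would write out most carefully. Once both inclusions hold the feasible regions coincide, the shared objective $r(\gammab)$ yields equal optimal values, and the equivalence of \eqref{eq:GrowingSetsOptProb} and \eqref{eq:GeneralPessimisticUnfalsification} follows.
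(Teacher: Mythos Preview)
Your proposal follows essentially the same route as the paper: both reduce the question to showing that the parameter attaining $\max_{\Wb\in\Omega_\thetab}\rho(\Wb)$, with $\rho(\Wb)=\min\{r(\gammab):\wb_i\in\mathcalWh(\gammab)\}$, already encloses every other admissible trajectory, and both single out the passage from $r(\gamma^*_k)\le r(\gamma^*_{\max})$ to $\mathcalWh(\gamma^*_k)\subseteq\mathcalWh(\gamma^*_{\max})$ as the decisive step. Your caution about that step is well placed: the paper dispatches it by invoking precisely the converse implication (strict order reflection, stated just above the proposition) rather than the forward monotonicity~\eqref{eq:GrowingSets} alone, so the ``gap'' you flag is closed there by the total ordering of the family $\{\mathcalWh(\gamma)\}$ that order reflection provides.
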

\begin{proof}
Let~$\gamma^*_k$ be the optimal parameter for the parameterized set~$\mathcalWh(\gamma)$ enclosing the $k$-th permissible
uncertainty trajectory from the set~$\Omega_\theta$, denoted as~$\Omega_{\theta, k}$.
Hence,~$\gamma^*_k$ is a solution to
\begin{subequations}\label{eq:OptimalGammak}
    \begin{align}
        \gamma^*_k \in \argmin_{\gammab\in\Gammah(U, Y)}\quad   & r(\gammab)\\
        \text{\st}\quad                                & w_i \in\mathcalWh(\gammab), \quad i = 0,\dots,N-1\\
                                                       & W = \Omega_{\theta, k}(\fh, \gh, U, Y).
    \end{align}
\end{subequations}

Let~$\gamma^*_\textnormal{max}$ be the parameters among all~$\gamma^*_k$ that lead to the largest value of~$r(\gammab)$
\begin{subequations}\label{eq:MaximiumRGammak}
\begin{align}
    r(\gamma^*_\textnormal{max}) \geq r(\gamma^*_k),\;\forall \gamma^*_k \in \Gamma^*_\Omega, \gamma^*_\textnormal{max} \in \Gamma^*_\Omega,
\end{align}
\end{subequations}
where~$\Gamma^*_\Omega$ denotes the set of all~$\gamma^*_k$ that are solutions to problem~\eqref{eq:OptimalGammak} for a given~$\Omega_\theta(\fh, \gh, U, Y)$.

Since bounded uncertainty inferability is assumed to be satisfied,~$r(\gamma^*_{\textrm{max}})$ is bounded.
Given condition~\eqref{eq:GrowingSets}, the set~$\mathcalWh(\gamma^*_{\textrm{max}})$
is a superset of~$\mathcalWh(\gamma^*_k)$ for all~$\gamma^*_k\in\Gamma_\Omega^*$.
Hence,
\begin{align}
    W \subseteq \mathcalWh(\gamma^*_{\textrm{max}}),\;\forall W\in\Omega_\theta(\fh, \gh, U, Y),
\end{align}
in other words, all permissible uncertainty trajectories are fully contained in~$\mathcalWh(\gamma^*_{\textrm{max}})$.

Since~$r(\gamma^*_\textnormal{max})$ is already a solution of a minimization problem, there exists no~$\gamma^\prime$ with~$r(\gamma^\prime) < r(\gamma^*_\textnormal{max})$
that also satisfies~\eqref{eq:MaximiumRGammak}.
Thus, the objective and the constraints of the original~SIP~\eqref{eq:GeneralPessimisticUnfalsification} are satisfied.
Finding the maximum across solutions of a minimization problem is exactly what is done in problem~\eqref{eq:GrowingSetsOptProb}.
Hence, problems~\eqref{eq:GeneralPessimisticUnfalsification} and~\eqref{eq:GrowingSetsOptProb} are equivalent.
\end{proof}

\subsection{Duality of Optimistic and Pessimistic Problems}
\begin{proposition}
Strong duality between the optimistic and pessimistic uncertainty model unfalsification problems holds if the permissible uncertainty realization is unique, \ie~$\card{\Omega} = 1$.
\end{proposition}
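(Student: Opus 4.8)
The plan is to show that the uniqueness hypothesis collapses the two formulations onto a single program, so that their optimal values coincide with zero gap. First I would record the generic ordering that plays the role of weak duality. A pair $(\thetab,\gammab)$ is feasible for the pessimistic problem~\eqref{eq:GeneralPessimisticUnfalsification} exactly when $W\subseteq\mathcalWh(\gammab)$ holds for \emph{every} $W\in\Omega_\thetab$, whereas it is feasible for (the projection onto $(\thetab,\gammab)$ of) the optimistic problem~\eqref{eq:GeneralOptimisticUnfalsification} exactly when $W\subseteq\mathcalWh(\gammab)$ holds for \emph{some} $W\in\Omega_\thetab$. Since $\Omega_\thetab$ is nonempty under the standing feasibility assumption, ``for all'' implies ``there exists'', so the pessimistic feasible set is contained in the optimistic one. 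Both problems minimize the same objective $r(\gammab)$, and minimizing over a smaller set can only increase the value; hence the optimistic optimal value is an unconditional lower bound for the pessimistic optimal value. This is the direction I would establish first.

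Next I would close the gap using $\card{\Omega} = 1$. Let $\Wb^\ast = (\wb_0^\ast,\dots,\wb_{N-1}^\ast)$ be the unique permissible uncertainty realization. In the optimistic problem the constraint $\Wb\in\Omega_\thetab$ now forces $\Wb=\Wb^\ast$, so all freedom in the trajectory variable disappears and~\eqref{eq:GeneralOptimisticUnfalsification} reduces to
\begin{align*}
\min_{\thetab,\gammab}\; r(\gammab)\quad\text{s.t.}\quad \wb^\ast_i\in\mathcalWh(\gammab),\; i=0,\dots,N-1,\; \thetab\in\Thetah(U,Y),\; \gammab\in\Gammah(U,Y).
\end{align*}
In the pessimistic problem the quantifier $\forall W\in\Omega_\thetab$ ranges over the single element $\Wb^\ast$, so the ``for all'' constraint $W\subseteq\mathcalWh(\gammab)$ becomes precisely $\wb^\ast_i\in\mathcalWh(\gammab)$ for every $i$. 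Both problems thus reduce to the identical program, with the same objective and the same feasible set in $(\thetab,\gammab)$; their optimal values agree, the duality gap is zero, and strong duality holds.

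The main obstacle I anticipate is handling the dependence of $\Omega_\thetab$ on $\thetab$ cleanly, since the cardinality condition as written suppresses the subscript. I would interpret $\card{\Omega}=1$ as holding over the admissible parameter range, so that the reduction above is valid uniformly in $\thetab$ rather than only at a fixed parameter value, and I would check that the domains $\Thetah(U,Y)$ and $\Gammah(U,Y)$ entering the two formulations coincide so the collapsed programs share a common feasible region. A secondary point is to make the word \emph{duality} precise: I would identify the optimistic problem as the best-case (``there exists a trajectory'') lower bound and the pessimistic problem as its worst-case (``for all trajectories'') counterpart, so that the first step is genuinely a weak-duality inequality and its tightness under $\card{\Omega}=1$ is genuinely strong duality.
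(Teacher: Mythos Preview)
Your proposal is correct and follows essentially the same route as the paper: under $\card{\Omega}=1$ both formulations collapse to the identical program $\min_{\thetab,\gammab} r(\gammab)$ subject to $\wb^\ast_i\in\mathcalWh(\gammab)$, so their optimal values coincide. Your preliminary weak-duality step and your discussion of the suppressed $\thetab$-subscript in $\Omega$ go a bit beyond what the paper records, but the core argument is the same.
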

\begin{proof}
Assuming that there exists exactly one permissible uncertainty realization, \ie~$\Omega = \{W_\textnormal{unique}\}$, the optimistic problem~\eqref{eq:GeneralOptimisticUnfalsification} can be rewritten as
\begin{subequations}\label{eq:ReformulatedOptProblemUniqueW}
    \begin{align}
        \min_{\gammab}\quad & r(\gammab)\\
        \text{\st}\quad & W_\textnormal{unique} \subseteq \mathcalWh(\gammab),
    \end{align}
\end{subequations}
and the pessimistic problem~\eqref{eq:GeneralPessimisticUnfalsification} becomes
\begin{subequations}
    \begin{align}
        \min_{\gammab}\quad & r(\gammab)\\
        \text{\st}\quad & W\subseteq \mathcalWh(\gammab),\; \forall W\in\{W_\textnormal{unique}\},
    \end{align}
\end{subequations}
which is identical to the reformulated optimistic problem~\eqref{eq:ReformulatedOptProblemUniqueW}.

Hence, given the uniqueness of the permissible uncertainty realization, the optimistic and pessimistic problem formulations are identical,
therefore, strong duality must hold.
\end{proof}

The more useful result is that  -- with mild assumptions -- a similar statement can be made about weak duality of the optimistic and pessimistic problems.

\begin{proposition}\label{prop:WeakDuality}
Suppose that the following holds:
\begin{enumerate}
    \item both the optimistic and pessimistic problems are feasible and have unique solutions.
    \item the set~$\mathcalWh(\gamma)$ is convex for all~$\gamma$.
    \item $\mathcalWh^*_\mathcal{X}$ enclosing a set~$\mathcal{X}$ with minimum volume $r$ satisfies monotonicity, \ie~$\mathcal{A} \subseteq \mathcal{B}\implies \mathcalWh^*_\mathcal{A} \subseteq \mathcalWh^*_\mathcal{B}$.
    \item the function~$r(\gamma)$ satisfies monotonicity, \ie~$\mathcalWh(\gamma_1) \subseteq \mathcalWh(\gamma_2) \implies r(\gamma_1) \leq r(\gamma_2)$.
\end{enumerate}
then weak duality between the optimistic and the pessimistic uncertainty model unfalsification problems holds.
\end{proposition}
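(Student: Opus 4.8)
The plan is to establish the one-sided inequality $r^*_\textnormal{opt} \le r^*_\textnormal{pess}$ between the optimal values of the optimistic problem~\eqref{eq:GeneralOptimisticUnfalsification} and the pessimistic problem~\eqref{eq:GeneralPessimisticUnfalsification}; since the optimistic formulation relaxes the ``for all'' constraint of the pessimistic one to a ``there exists'' constraint, this bound is exactly the statement of weak duality, and it degenerates to the equality of the preceding strong-duality result when $\Omega$ is a singleton. Assumption~1 guarantees that both optima are attained and well defined, so it suffices to compare the two optimal values. The key idea is to rephrase both problems through the minimum-volume enclosing operator $\mathcalWh^*_{(\cdot)}$ of Assumption~3: for a fixed $\thetab$ and any point set $\mathcal{S}$, the innermost minimization of $r(\gammab)$ over $\gammab$ subject to $\mathcal{S} \subseteq \mathcalWh(\gammab)$ returns $\mathcalWh^*_{\mathcal{S}}$, with objective value $r$ evaluated at the corresponding $\gammab$. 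Convexity of $\mathcalWh(\gammab)$ (Assumption~2) makes this operator well defined, since containment of a trajectory $W = (\wb_0,\dots,\wb_{N-1})$ is equivalent to containment of the convex hull $\conv{\mathcal{P}_W}$ of its point set $\mathcal{P}_W \coloneqq \{\wb_0,\dots,\wb_{N-1}\}$.

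First I would fix $\thetab$ and reduce each problem to a single enclosing-set computation. For the optimistic problem, the inner minimization over $\gammab$ at a fixed permissible trajectory $W \in \Omega_\thetab$ yields $\mathcalWh^*_{\mathcal{P}_W}$, so that $r^*_\textnormal{opt}(\thetab) = \min_{W \in \Omega_\thetab} r(\mathcalWh^*_{\mathcal{P}_W})$. For the pessimistic problem I would collapse the ``for all'' constraint: feasibility together with the bounded uncertainty inferability assumed earlier guarantees that $\Omega_\thetab$ is bounded, and convexity makes the containment condition closed under unions of point sets, so the requirement that $W \subseteq \mathcalWh(\gammab)$ for every $W \in \Omega_\thetab$ is equivalent to $\mathcal{P}_\Omega \subseteq \mathcalWh(\gammab)$, where $\mathcal{P}_\Omega \coloneqq \bigcup_{W \in \Omega_\thetab} \mathcal{P}_W$. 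Hence $r^*_\textnormal{pess}(\thetab) = r(\mathcalWh^*_{\mathcal{P}_\Omega})$.

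Next I would run the monotonicity chain. For every $W \in \Omega_\thetab$ the elementary inclusion $\mathcal{P}_W \subseteq \mathcal{P}_\Omega$ holds, so the monotonicity of the enclosing operator (Assumption~3) gives $\mathcalWh^*_{\mathcal{P}_W} \subseteq \mathcalWh^*_{\mathcal{P}_\Omega}$, and the monotonicity of the metric $r$ (Assumption~4) then gives $r(\mathcalWh^*_{\mathcal{P}_W}) \le r(\mathcalWh^*_{\mathcal{P}_\Omega})$. Taking the minimum over $W \in \Omega_\thetab$ on the left, while the right-hand side is independent of $W$, yields $r^*_\textnormal{opt}(\thetab) \le r^*_\textnormal{pess}(\thetab)$. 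Finally, minimizing both sides over $\thetab \in \Thetah(U,Y)$ preserves the inequality, giving $r^*_\textnormal{opt} = \min_\thetab r^*_\textnormal{opt}(\thetab) \le \min_\thetab r^*_\textnormal{pess}(\thetab) = r^*_\textnormal{pess}$, the desired weak-duality bound.

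I expect the main obstacle to be the rigorous collapse of the pessimistic ``for all'' constraint to the single condition $\mathcal{P}_\Omega \subseteq \mathcalWh(\gammab)$ together with the well-definedness of $\mathcalWh^*_{\mathcal{P}_\Omega}$, since this is precisely where all four hypotheses must act simultaneously: convexity (Assumption~2) to identify trajectory containment with convex-hull containment, feasibility and boundedness (Assumption~1) to ensure the infimum defining $\mathcalWh^*_{\mathcal{P}_\Omega}$ is attained by an admissible $\gammab \in \Gammah(U,Y)$, and the two monotonicity properties (Assumptions~3 and~4) to transport the inclusion $\mathcal{P}_W \subseteq \mathcal{P}_\Omega$ through the enclosing operator into the objective. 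A secondary point to check carefully is that Assumption~3 is invoked on the possibly infinite point set $\mathcal{P}_\Omega$ rather than only on finite ones, which may require a closure or limiting argument to justify.
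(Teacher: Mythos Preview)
Your proposal is correct and follows essentially the same route as the paper: both identify the optimistic optimum as the minimum-volume enclosure $\mathcalWh^*_{\Wb^*_\textnormal{opt}}$ of a single permissible trajectory and the pessimistic optimum as the minimum-volume enclosure of a larger set containing it, then invoke Assumption~3 to obtain the set inclusion $\mathcalWh(\gammab^*_\textnormal{opt}) \subseteq \mathcalWh(\gammab^*_\textnormal{pess})$ and Assumption~4 to conclude $r(\gammab^*_\textnormal{opt}) \le r(\gammab^*_\textnormal{pess})$. Your version is in fact more explicit than the paper's --- you spell out the collapse of the pessimistic ``for all'' constraint to the union point set $\mathcal{P}_\Omega$, the role of convexity (Assumption~2), and the outer minimization over $\thetab$, all of which the paper handles informally or leaves implicit.
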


\begin{proof}
Given a set of permissible uncertainty trajectories~$\Omega$, let~$(\gammab_\textnormal{opt}^*, \Wb_\textnormal{opt}^*)$ be the solution of the optimistic problem with
optimal objective value~$r(\gammab_\textnormal{opt}^*)$, whereas the solution of the pessimistic problem is assumed to be given by~$\gammab_\textnormal{pess}^*$ with optimal
objective value~$r(\gammab_\textnormal{pess}^*)$.
From the optimistic problem follows that~$\Wb_\textnormal{opt}^* \subseteq \mathcalWh(\gammab_\textnormal{opt}^*)$ and
since~$\Wb_\textnormal{opt}^* \in\Omega$, $\Wb_\textnormal{opt}^* \subseteq \mathcalWh(\gammab_\textnormal{pess}^*)$ must also be satisfied.

By construction of the pessimistic problem, the set~$\mathcalWh(\gammab_\textnormal{pess}^*)$ must contain~\emph{at least} the trajectory~$\Wb_\textnormal{opt}^*$.
In the special case where~$\Omega$ solely consists of a unique uncertainty trajectory, strong duality holds, as proven above.
In this case, the unique uncertainty trajectory is equivalent to~$\Wb_\textnormal{opt}^*$. If~$\card{\Omega} > 1$, the set~$\mathcalWh(\gammab_\textnormal{pess}^*)$ remains a superset of~$\Wb_\textnormal{opt}^*$, but also contains all other trajectories given in~$\Omega$.
This implies that~$\mathcalWh(\gammab_\textnormal{pess}^*)$ must contain at least one point that is not contained in~$\mathcalWh(\gammab_\textnormal{opt}^*)$, assuming that trajectories in~$\Omega$ are mutually distinct.

However, given monotonicity of the set~$\mathcalWh^*_\mathcal{X}$, the pessimistic enclosing set $\mathcalWh(\gammab^*_\textnormal{pess})$ must fully contain
the optimistic enclosing set $\mathcalWh(\gammab_\textnormal{opt}^*)$
\begin{align}
    \mathcalWh(\gammab_\textnormal{opt}^*) \subseteq \mathcalWh(\gammab_\textnormal{pess}^*).
\end{align}

Then, by monotonicity of~$r$, the optimal objective value of the pessimistic problem upper bounds any solution of the optimistic problem
\begin{align}
    \mathcalWh(\gammab_\textnormal{opt}^*) \subseteq \mathcalWh(\gammab_\textnormal{pess}^*) \implies r(\gammab_\textnormal{opt}^*) \leq r(\gammab_\textnormal{pess}^*),
\end{align}
and therefore the two problems satisfy weak duality.
\end{proof}

\subsection{Linear Systems}\label{ch:LinearSystems}

The previously introduced concept of bounded uncertainty inferability shares some similarities to observability in classical control theory.
Therefore the~\emph{uncertainty observability matrix} is introduced analogously to the
observability matrix.

\begin{definition}[Uncertainty Observability Matrix]\label{def:UncertaintyObservabilityMatrix}
Given the linear system
\begin{subequations}\label{eq:LinearModelLinearUncertainties}
    \begin{align}
        x_{k+1} &= Ax_k + Bu_k + Ew_k\\
        y_k &= C x_k + D u_k + F \vh_k,
    \end{align}
\end{subequations}
the measurements~$Y$ can be related to the system inputs by the system of linear equations
\begin{align}\label{eq:UncertaintyObservabilityEquations}
    y_k &- C\sum_{i=0}^{k-1}A^i B u_{k-i-1} - D u_k - F \vh_k\nonumber\\
    &= C A^k \xh_0 + C \sum_{i=0}^{k-1} A^i E \wh_{k-i-1},\,\, k = 0,\dots,N-1.
\end{align}

Based on this system of equations, the~\emph{uncertainty observability matrix}~$\mathcal{E}$ is defined as
\begingroup
\footnotesize
\begin{align}
    \mathcal{E} &\coloneq \begin{bmatrix}
        C        & 0         & 0         & \cdots & 0\\
        CA       & CE        & 0         & \cdots & 0\\
        CA^2     & CAE       & CE        & \cdots & 0\\
        \vdots   & \vdots    & \vdots    & \ddots &\vdots\\
        CA^{N-1} & CA^{N-2}E & CA^{N-3}E & \cdots & CE
    \end{bmatrix}.
\end{align}
\endgroup
\end{definition}
\begin{remark}
Input feedthrough~$D u_k$ has no effect on~$\mathcal{E}$.
\end{remark}
\begin{fact}\label{fact:Fact1}
For~$\vh_k \equiv 0$, the system of
equations~\eqref{eq:UncertaintyObservabilityEquations} has a unique solution if
and only if~$\mathcal{E}$ is square and nonsingular.
\end{fact}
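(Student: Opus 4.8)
The plan is to observe that, once measurement noise is switched off ($\vh_k\equiv0$), the relation \eqref{eq:UncertaintyObservabilityEquations} is a structured linear system in the unknowns, so that ``unique solution'' becomes the classical invertibility question for its coefficient matrix. First I would move every term depending only on known quantities to the left: with $\vh_k\equiv0$ the vector $b_k \coloneq y_k - C\sum_{i=0}^{k-1}A^iBu_{k-i-1} - Du_k$ is fixed by the data $(U,Y)$ and the known matrices, while the right-hand side $CA^k\xh_0 + C\sum_{i=0}^{k-1}A^iE\wh_{k-i-1}$ is linear in the unknown initial state $\xh_0$ and the unknown disturbances $\wh_0,\dots,\wh_{N-2}$, the largest index occurring at $k=N-1$.

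Next I would stack the unknowns as $z\coloneq(\xh_0,\wh_0,\dots,\wh_{N-2})$ and the data as $b\coloneq(b_0,\dots,b_{N-1})$, and check that the coefficient matrix is exactly $\mathcal{E}$. The $\xh_0$-column contributes $CA^k$ in block row $k$, which matches the first block column of $\mathcal{E}$. For the disturbance part I would reindex via $j=k-i-1$, giving $C\sum_{i=0}^{k-1}A^iE\wh_{k-i-1}=\sum_{j=0}^{k-1}CA^{k-j-1}E\,\wh_j$, so the coefficient of $\wh_j$ in block row $k$ equals $CA^{k-j-1}E$ for $j<k$ and $0$ otherwise; this is precisely the corresponding block of $\mathcal{E}$. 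The system is therefore $\mathcal{E}z=b$.

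With the problem in the form $\mathcal{E}z=b$, the claim reduces to the standard characterisation of an invertible linear map, which I would establish in both directions. For ``if'', a square nonsingular $\mathcal{E}$ admits $\mathcal{E}^{-1}$, so $z=\mathcal{E}^{-1}b$ is the unique solution. For ``only if'', uniqueness forces $\ker\mathcal{E}=\{0\}$, since any nonzero kernel element added to a solution would produce a second solution; hence $\mathcal{E}$ has full column rank and at least as many rows as columns. Combining this injectivity with the requirement that a solution exist for the data generated by the model (solvability, \ie~surjectivity) gives full row rank as well, and the two rank conditions together force $\mathcal{E}$ to be square with $\rank{\mathcal{E}}$ equal to its common dimension, \ie~nonsingular.

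The hard part will be the ``only if'' direction, specifically justifying \emph{square}ness rather than merely full column rank. Uniqueness for a single consistent right-hand side yields only injectivity; to pin down squareness one must read ``has a unique solution'' as holding for every admissible $b$ --- equivalently, demand that the model be able to reproduce any data realisation, which supplies the surjectivity half. I would therefore state this interpretation explicitly and then invoke that a linear map on finite-dimensional spaces is bijective if and only if it is square with trivial kernel.
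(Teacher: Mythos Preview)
The paper states this as a \emph{Fact} without proof, relying on standard linear algebra; your argument supplies exactly the details the paper omits and is correct. In particular, your discussion of the ``only if'' direction is more careful than the paper itself: you correctly observe that uniqueness for a \emph{single} consistent right-hand side yields only injectivity of $\mathcal{E}$ (full column rank), which would be compatible with a strictly tall $\mathcal{E}$, so squareness does not follow without an additional hypothesis. Your resolution --- reading ``has a unique solution'' as a statement that must hold for arbitrary data $(U,Y)$, hence for arbitrary $b$, thereby forcing surjectivity as well --- is the natural interpretation in the unfalsification setting and is implicitly what the authors intend. With that reading the biconditional is the classical characterisation of bijective linear maps, and your proof goes through.
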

\begin{fact}\label{fact:Fact2}
If~$\mathcal{E}$ is nonsingular and all measurement noise is zero~($\vh_k \equiv 0)$, the set of permissible uncertainty trajectories~$\Omega$ contains exactly one uncertainty trajectory.
\end{fact}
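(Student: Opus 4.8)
The plan is to reduce the statement to the linear-algebraic characterization already captured by Fact~\ref{fact:Fact1}. First I would write out the set $\Omega$ explicitly for the model~\eqref{eq:LinearModelLinearUncertainties}: an uncertainty trajectory $W=(\wh_0,\dots,\wh_{N-1})$ belongs to $\Omega$ exactly when there is a compatible state trajectory satisfying both the dynamics and the output equations. Imposing $\vh_k\equiv 0$ and eliminating the states through the convolution identity $\xh_k = A^k\xh_0 + \sum_{i=0}^{k-1}A^i(Bu_{k-i-1}+E\wh_{k-i-1})$, the $N$ output equations collapse to precisely the linear system~\eqref{eq:UncertaintyObservabilityEquations}.

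Next I would stack these equations into the single matrix equation $\mathcal{E}\,z = b$, where $z \coloneqq (\xh_0,\wh_0,\dots,\wh_{N-2})$ is the vector of unknowns read off the block-columns of $\mathcal{E}$, and $b$ collects the data-dependent left-hand sides, i.e. the measured $y_k$ corrected by the known input terms. With this identification, a trajectory lies in $\Omega$ if and only if its entries arise as the $\wh$-block of some solution $z$ of $\mathcal{E}z=b$. Since $\mathcal{E}$ is square and nonsingular by assumption, Fact~\ref{fact:Fact1} supplies the unique solution $z=\mathcal{E}^{-1}b$, so that $\xh_0$ and $\wh_0,\dots,\wh_{N-2}$ are pinned down uniquely; a forward simulation from this $\xh_0$ then recovers the unique compatible state trajectory, providing the $\Xh$ required in the definition~\eqref{eq:Omega} of $\Omega$. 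Hence the portion of the uncertainty trajectory constrained by the data is a singleton.

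The step I expect to be the main obstacle is the bookkeeping of the terminal term $\wh_{N-1}$. In the model~\eqref{eq:LinearModelLinearUncertainties} the process uncertainty enters the outputs only through the state, so $y_{N-1}$ depends on $\wh_0,\dots,\wh_{N-2}$ but never on $\wh_{N-1}$, which influences only the unobserved state $\xh_N$ and is therefore formally unconstrained by the data. To make $\card{\Omega}=1$ rigorous I would restrict the permissible-trajectory set to the uncertainties that actually appear in~\eqref{eq:UncertaintyObservabilityEquations}, namely $(\wh_0,\dots,\wh_{N-2})$ together with $\xh_0$, and argue that $\wh_{N-1}$ is immaterial to $\Omega$ as it pertains to the given data; with this convention the unique solvability of $\mathcal{E}z=b$ yields exactly one permissible uncertainty trajectory.
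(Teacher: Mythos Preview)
Your argument is correct and is essentially the route the paper has in mind, though the paper does not actually write out a proof: Fact~\ref{fact:Fact2} is stated as a bare ``Fact'' and is treated as an immediate consequence of Fact~\ref{fact:Fact1}, namely that nonsingularity of~$\mathcal{E}$ forces a unique solution of~\eqref{eq:UncertaintyObservabilityEquations}, which in turn pins down the only permissible uncertainty trajectory. Your reduction of~$\Omega$ to the stacked linear system $\mathcal{E}z=b$ via the convolution identity, followed by inversion and forward simulation to recover~$\Xh$, is exactly the intended mechanism and is in fact made explicit later in the proof of Proposition~3.4, equation~\eqref{eq:ExplicitWSolution}.

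Your observation about the terminal term~$\wh_{N-1}$ is a genuine subtlety that the paper glosses over. You are right that in model~\eqref{eq:LinearModelLinearUncertainties} the process uncertainty enters the outputs only through the state, so $y_0,\dots,y_{N-1}$ constrain only $\wh_0,\dots,\wh_{N-2}$; this is also visible in the block structure of~$\mathcal{E}$ and in~\eqref{eq:ExplicitWSolution}, both of which omit~$\wh_{N-1}$. The paper silently adopts the convention you propose (the permissible trajectory is effectively $(\wh_0,\dots,\wh_{N-2})$), so your resolution is consistent with how the authors themselves use the result downstream. Flagging this bookkeeping point is a strength of your write-up, not a gap.
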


With~$A\in\R{n\times n}$,~$C\in\R{p\times n}$ and~$E\in\R{n\times d_w}$, the dimensions of~$\mathcal{E}$ are~$p (N+1) \times (n + d_w N)$.
Hence, for specific cases, it is possible for~$\mathcal{E}$ to be square and nonsingular.
In a lot of practical cases, however,~$\mathcal{E}$ is reducible and results in a slightly shorter matrix after reduction.
Then, the system of equations~\eqref{eq:UncertaintyObservabilityEquations} is under-determined and therefore has unbounded solutions. In this case, additional assumptions are necessary in order to ensure bounded uncertainty inferability.
\begin{fact}\label{fact:Fact3}
If~$\vh_k\equiv 0$ and~$\mathcal{E}$ are irreducible and short (fewer rows than columns) then~$\Omega$ is unbounded, and bounded uncertainty inferability is not satisfied.
\end{fact}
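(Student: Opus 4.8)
The plan is to read the uncertainty observability equations~\eqref{eq:UncertaintyObservabilityEquations} with~$\vh_k \equiv 0$ as a single linear system~$\mathcal{E}z = b$, where~$z \coloneq (\xh_0, \wh_0, \dots, \wh_{N-2})$ stacks the unknown initial state and the disturbances and~$b$ collects the data-dependent left-hand sides. Under this identification~$\Omega$ is exactly the image of the affine solution set~$\{z : \mathcal{E}z = b\}$ under the projection~$\Pi_\wh$ onto the disturbance coordinates. I would read the two hypotheses as follows: \emph{irreducible} means~$\mathcal{E}$ has full row rank, so no measurement equation is a linear combination of the others and none can be dropped, while \emph{short} means~$\mathcal{E}$ has strictly fewer rows than columns. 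It is convenient to partition~$\mathcal{E} = [\mathcal{O} \mid \mathcal{W}]$ into its state block~$\mathcal{O} = [C; CA; \dots; CA^{N-1}]$, the classical observability matrix, and its disturbance block~$\mathcal{W}$, the staircase of~$CA^iE$ terms.

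First I would observe that full row rank makes~$\mathcal{E}$ surjective, so~$\mathcal{E}z = b$ is consistent for every~$b$ and~$\Omega$ is nonempty; writing a particular solution as~$z_0$, the solution set is the affine subspace~$z_0 + \ker\mathcal{E}$. Shortness together with full row rank gives~$\dim\ker\mathcal{E} = (\text{columns}) - (\text{rows}) > 0$, so this subspace has strictly positive dimension. The decisive reformulation is that~$\Omega$ is unbounded if and only if~$\Pi_\wh(\ker\mathcal{E}) \neq \{0\}$, \ie the null space contains a direction with a nonzero disturbance component; equivalently, the columns of~$\mathcal{W}$ are linearly dependent modulo the range of~$\mathcal{O}$. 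Once such a direction~$(\delta\xh_0, \delta\wh)$ with~$\delta\wh \neq 0$ is exhibited, the line~$\{z_0 + t(\delta\xh_0, \delta\wh) : t \in \justR\}$ projects to an unbounded ray in~$\Omega$, and bounded uncertainty inferability fails by Definition~\ref{def:BoundedUncertaintyInferability}.

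The main obstacle is precisely this transfer from the full solution set to its disturbance projection, since a positive-dimensional~$\ker\mathcal{E}$ does not by itself force~$\Pi_\wh(\ker\mathcal{E}) \neq \{0\}$: the excess dimension could be absorbed entirely by \emph{unobservable state} directions, for which~$\delta\xh_0 \in \ker\mathcal{O}$ and~$\delta\wh = 0$, leaving~$\Omega$ a single point even though~$\mathcal{E}$ is short and irreducible. I would close this gap by invoking observability of~$(A, C)$, \ie~$\rank{\mathcal{O}} = n$, which is the natural companion to the observability analogy drawn in this section. A short rank count shows that~$\Omega$ is bounded exactly when~$\rank{\mathcal{E}} = \rank{\mathcal{O}} + d_w(N-1)$; with~$\rank{\mathcal{O}} = n$ this equality would require~$\mathcal{E}$ to have~$n + d_w(N-1)$ independent rows, contradicting shortness. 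Hence, under observability, shortness of an irreducible~$\mathcal{E}$ is equivalent to~$\Pi_\wh(\ker\mathcal{E}) \neq \{0\}$ and~$\Omega$ is unbounded as claimed; I would flag that without an observability-type hypothesis the statement can fail, so it is worth adding to the assumptions.
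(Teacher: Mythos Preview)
The paper offers no formal proof of this Fact; its only justification is the sentence preceding it, that an under-determined linear system ``has unbounded solutions.'' Your setup is the same starting point: cast \eqref{eq:UncertaintyObservabilityEquations} as $\mathcal{E}z=b$, read irreducibility as full row rank, and conclude from shortness that $\ker\mathcal{E}$ has positive dimension.

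Where you go beyond the paper is in recognising that $\Omega$ is not the full affine solution set but its \emph{projection} onto the disturbance coordinates, so a nontrivial $\ker\mathcal{E}$ lying entirely in the $\xh_0$-block would still leave $\Omega$ a singleton. This is a real gap in the paper's informal argument, and your concern is justified by a concrete instance: with $n=2$, $p=d_w=1$, $N=2$, $A=I$, $C=[1\ 0]$, $E=[1\ 0]^\top$, one obtains $\mathcal{E}=\left[\begin{smallmatrix}1&0&0\\1&0&1\end{smallmatrix}\right]$, which is short with full row rank, yet its kernel is spanned by $(0,1,0)$ and $\Omega$ collapses to a single point. Your proposed remedy --- assume observability of $(A,C)$, so $\ker\mathcal{O}=\{0\}$ and every nonzero kernel direction must carry a nonzero $\delta\wh$ component --- is correct, and the rank identity $\rank{\mathcal{E}}=\rank{\mathcal{O}}+d_w(N-1)$ as the exact boundedness criterion checks out. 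In short, you have not merely matched the paper's reasoning but sharpened it and flagged a missing hypothesis the statement ought to carry.
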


Facts~\ref{fact:Fact1} through~\ref{fact:Fact3} neglect the effect of
measurement noise. Under the assumption that measurement noise~$\vh_k$ affects
the system in a linear way (as in~\eqref{eq:LinearModelLinearUncertainties}), a
slightly weaker statement about the boundedness of the set of possible uncertainty trajectories~$\Omega$ can be made.
\begin{proposition}
For linear models of the form~\eqref{eq:LinearModelLinearUncertainties}
with~$\det{\mathcal{E}} \neq 0$ and~$\vh_k \in \mathcalVh$, where~$\mathcalVh$ is
a bounded set, the set of all permissible uncertainty trajectories~$\Omega$
is bounded.
\end{proposition}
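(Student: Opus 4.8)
The plan is to gather the uncertainty observability equations~\eqref{eq:UncertaintyObservabilityEquations} into a single linear system, exploit the invertibility of~$\mathcal{E}$ to express the unknowns as an affine function of the measurement noise, and then use the elementary fact that an affine map sends bounded sets to bounded sets. The noise-free Facts~\ref{fact:Fact1} and~\ref{fact:Fact2} already give $\card{\Omega} = 1$ when $\vh_k \equiv 0$; here the aim is only to relax that singleton to a bounded set once $\vh_k$ is allowed to range over the bounded set~$\mathcalVh$.

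First I would stack the unknowns as $\vec{\xi} \coloneqq (\xh_0, \wh_0, \dots, \wh_{N-1})$ and the noise as $\vec{v} \coloneqq (\vh_0, \dots, \vh_{N-1})$, and rewrite~\eqref{eq:UncertaintyObservabilityEquations} in the compact form
\begin{align}\label{eq:pbCompact}
    \mathcal{E}\,\vec{\xi} = \vec{b} - \mathcal{F}\,\vec{v},
\end{align}
where $\vec{b}$ collects the known terms $y_k - C\sum_{i=0}^{k-1} A^i B u_{k-i-1} - D u_k$ and $\mathcal{F} \coloneqq \diag{F, \dots, F}$ is the block-diagonal matrix of measurement-noise gains. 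Because~\eqref{eq:UncertaintyObservabilityEquations} is obtained by unrolling the recursion in~\eqref{eq:LinearModelLinearUncertainties}, satisfying~\eqref{eq:pbCompact} for some admissible noise is equivalent to the existence of a compatible state trajectory, so~\eqref{eq:pbCompact} faithfully encodes membership in~$\Omega$. By construction the coefficient matrix on the left is precisely the uncertainty observability matrix of Definition~\ref{def:UncertaintyObservabilityMatrix}.

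Since $\det{\mathcal{E}} \neq 0$, the matrix~$\mathcal{E}$ is square and nonsingular, so~\eqref{eq:pbCompact} has the unique solution
\begin{align}\label{eq:pbSolve}
    \vec{\xi} = \mathcal{E}^{-1}\bigl(\vec{b} - \mathcal{F}\,\vec{v}\bigr),
\end{align}
which is affine in~$\vec{v}$. Any $\Wh = (\wh_0, \dots, \wh_{N-1}) \in \Omega$ satisfies~\eqref{eq:pbCompact} for some realization with each $\vh_k \in \mathcalVh$, hence is recovered from~\eqref{eq:pbSolve} by projecting~$\vec{\xi}$ onto its uncertainty components for some $\vec{v}$ in the product set $\mathcalVh \times \dots \times \mathcalVh$. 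As $\mathcalVh$ is bounded this product set is bounded; the affine map in~\eqref{eq:pbSolve} is Lipschitz and therefore has bounded image; and the projection onto the uncertainty coordinates preserves boundedness. It follows that $\Omega$ is contained in a bounded set, so $\Omega$ is bounded.

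The hard part will not be the topological conclusion --- which is immediate once~\eqref{eq:pbSolve} is in hand --- but the bookkeeping that establishes the square, invertible structure: one must check that $\xh_0$ together with \emph{all} uncertainty vectors entering the equations appear as columns of~$\mathcal{E}$, so that $\det{\mathcal{E}} \neq 0$ is meaningful and no uncertainty component is left unconstrained (an unconstrained component could escape to infinity and break boundedness). This squareness is exactly the premise inherited from Facts~\ref{fact:Fact1}--\ref{fact:Fact2}, so the remaining effort is simply to verify that the measurement-noise term enters additively, as in~\eqref{eq:LinearModelLinearUncertainties}, and thus only shifts the right-hand side of~\eqref{eq:pbCompact} without altering~$\mathcal{E}$.
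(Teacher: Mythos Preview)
Your proof is correct and follows essentially the same route as the paper: both invert $\mathcal{E}$ to write the unknowns as an affine function of the measurement noise and then argue that affine images of bounded sets are bounded (the paper spells this out via the triangle inequality, you phrase it as Lipschitz continuity of an affine map). One minor indexing slip: the equations~\eqref{eq:UncertaintyObservabilityEquations} only involve $\wh_0,\dots,\wh_{N-2}$, so your stacked vector $\vec{\xi}$ should end at $\wh_{N-2}$, not $\wh_{N-1}$, to match the columns of~$\mathcal{E}$.
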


\begin{proof}
Take the system of equations~\eqref{eq:UncertaintyObservabilityEquations}
with~$\vh_k \neq 0$.
Then, since~$\det{\mathcal{E}} \neq 0$, there exists a unique solution to
the system of equations for~\emph{any} given~$\vh_k\in\mathcalVh$.
Namely,
\begin{align}\label{eq:ExplicitWSolution}
    \begin{bmatrix}
        \xh_0\\
        \wh_0\\
        \vdots\\
        \wh_{N-2}
    \end{bmatrix} = &\underbrace{\mathcal{E}^{-1} \begin{bmatrix}
        y_0 - Du_0\\
        y_1 - CBu_0 - Du_1\\
        \vdots\\
        y_{N-1} - C\sum_{i=0}^{N-2}A^i B u_{k-i-1} - D u_{N-1}
    \end{bmatrix}}_{\textrm{``homogeneous'' solution for $\vh_k \equiv 0$}}\nonumber\\
    &\quad\quad -\mathcal{E}^{-1} \begin{bmatrix}
        F\vh_0\\
        F\vh_1\\
        \vdots\\
        F\vh_{N-1}
    \end{bmatrix},\quad \vh_k\in\mathcalVh.
\end{align}

Therefore, the resulting trajectory~$\{\wh_k\}_{k = 0,\dots,N-2}$ is a linear combination of elements in the data set~$(U, Y)$ and elements
from the measurement noise trajectory~$\{\vh_k\}_{k=0,\dots,N-1}$.

Take a linear combination of two vectors~$a$ and~$b$
\begin{align}
    c = \alpha a + \beta b,
\end{align}
where~$\alpha, \beta\in\justR$,~$a\in \mathcal{A}$ and~$b\in\mathcal{B}$.
Making use of the triangle inequality, the norm of~$c$ can be computed as
\begin{subequations}
\begin{align}
    \lVert c\rVert = \lVert \alpha a + \beta b\rVert &\leq \lVert\alpha a\rVert + \lVert \beta b\rVert\\
                                                     &\leq \lvert \alpha\rvert\cdot \lVert a\rVert + \lvert \beta \rvert \cdot \lVert b\rVert.
\end{align}
\end{subequations}

Assuming that the sets~$\mathcal{A}$ and~$\mathcal{B}$ are bounded with bounds~$M_\mathcal{A}$ and~$M_\mathcal{B}$, respectively,
a bound for any linear combination~$c$ can be formulated as
\begin{subequations}
    \begin{align}
                    (\lVert a\rVert \leq M_\mathcal{A},\;\forall a\in\mathcal{A})
        \;\wedge\;    (\lVert b\rVert \leq M_\mathcal{B},\;\forall b\in\mathcal{B})\\
        \Rightarrow\; \lVert c\rVert \leq \lvert\alpha\rvert\cdot M_\mathcal{A} + \lvert\beta\rvert\cdot M_\mathcal{B} < \infty.
    \end{align}
\end{subequations}

Going back to the original problem, it follows that if the data set~$(U, Y)$ and the measurement noise set~$\mathcalVh$ are bounded, the set~$\Omega$ containing
all possible uncertainty trajectories given by~\eqref{eq:ExplicitWSolution} must also be bounded.
\end{proof}

\section{Numerical Results}
For testing the optimistic and pessimistic uncertainty model unfalsification approaches,
a simple ``pendulum on a cart'' system was used to generate data.
As depicted in Figure~\ref{fig:PendulumOnCart}, this system consists of a one-dimensional cart carrying a nonlinear
pendulum. The equations of motion for a nearly identical system can be found in~\cite{Owen:2014}.
In this example, it is assumed that only the linear force input~$F(t)$ is available.
In order to generate meaningful data that covers a wide range of force inputs,
a persistently exciting input sequence was applied to the system and the state evolution was simulated using a self-developed Julia simulation framework~\cite{SimpleSim.jl}.

Then, the uncertain linear model~\eqref{eq:LinearModelLinearUncertainties} was used for optimistic and
pessimistic uncertainty model unfalsification. Since the Jacobians of the original nonlinear systems
were computed to set up the linear model, the model uncertainty is equivalent to the linearization error.
For this example, we chose the following simple $\ell_2$-norm ball to model the uncertainty:
\begin{align}\label{eq:L2normball}
    \mathcalWh(\gamma) = \{w\;\colon\; \lVert w\rVert_2 \leq \gamma\}.
\end{align}
An exemplary uncertainty trajectory and the fitted $\ell_2$-norm ball are visualized in Figure~\ref{fig:L2Ball}.

\begin{figure}[t]
    \vspace{-8mm}
    \centering
    \resizebox {0.6\columnwidth} {!} {
    \begin{tikzpicture}
        \draw[draw=none] (-3.5,-3.5) rectangle (3.5, 2);

        \draw[|-Latex] (0, 0.8) -- (1.5, 0.8) node[midway, above] {$x$};

        \draw[thick] (-1.0,-0.5) rectangle (1.0,0.5);  
        \draw[thick, black, fill=white] (-0.6,-0.5) circle (0.2); 
        \draw[thick, black, fill=white] (0.6,-0.5) circle (0.2);  

        \draw[fill, pattern=north east lines, draw=none] (-2.0,-1.0) rectangle (2.0,-0.7);
        \draw[thick] (-2.0,-0.7) -- (2.0,-0.7);

        \draw[thick] (0,0) -- ({3*tan(25)},-3) node[pos=0.65, above right] {$L$};

        \draw[fill=black] (0,0) circle (0.07);     
        \draw[fill=black] ({3*tan(25)},-3) circle (0.25); 
        \node at ({3*tan(25)+0.5},-3) {$m$};

        \draw[dashed] (0,0) -- (0,-3);
        \draw[-Latex] ([shift=(270:2.5)]0,0) arc (270:295:2.5) node[midway, below] {$\theta$};

        \draw[-{Latex[length=5pt]}] (-2.2, 0.0) -- (-1.0, 0.0) node[midway, above] {$F(t)$};

        \draw[-Latex] ([shift=(255:1.5)]0,0) arc (260:313:1.5) node[near start, below left] {$T(t)$};

        \draw[-{Latex[length=5pt]}] (-1.5, -2.2) -- (-1.5, -3) node[pos=0.4, right] {$g$};
    \end{tikzpicture}
    }
    \vspace{-8pt}
    \caption{Pendulum on a Cart}
    \label{fig:PendulumOnCart}
\end{figure}
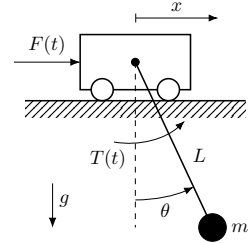

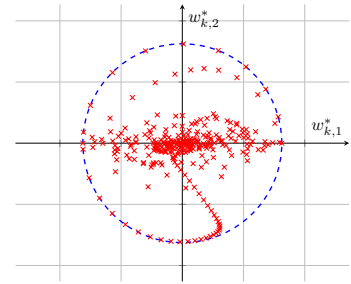
\begin{figure}[b]
    \centering
    \resizebox {0.52\columnwidth} {!} {
    \begin{tikzpicture}
        \begin{axis}[
            grid=both,
            axis lines = middle,
            axis equal,
            legend pos=north east,
            scatter/classes={%
                a={mark=x,red}%
            },
            legend style={
                legend cell align=left
            },
            xlabel=$w^*_{k,1}$,
            ylabel=$w^*_{k,2}$,
            enlargelimits = 0.2,
            xticklabel=\empty,
            yticklabel=\empty,
            scaled ticks = false,
        ]
            \pgfplotstableread[col sep=comma]{figure_data/set_L2.csv}\radiusdata
            \pgfplotstablegetelem{0}{radius}\of\radiusdata
            \let\ballradius\pgfplotsretval

            \addplot[
                scatter,
                only marks,
                scatter src=explicit symbolic,
            ] table[
                col sep=comma,
                meta=label
            ] {figure_data/data_L2.csv};

            \draw[thick, dashed, blue] (0,0) circle [radius=\ballradius];
        \end{axis}
    \end{tikzpicture}
    }
    \vspace{-5pt}
    \caption{Uncertainty trajectory in optimized $\ell_2$-ball}
    \label{fig:L2Ball}
\end{figure}

Using the data generated by the simulated nonlinear pendulum on a cart and the uncertain linear system,
both the optimistic and pessimistic uncertainty model unfalsification problems were solved for the
uncertainty model~\eqref{eq:L2normball}. 
The optimistic uncertainty model unfalsification problem can be solved directly using widely available solvers. In this case, the solver IPOPT~\cite{Ipopt} was used.
The pessimistic uncertainty model unfalsification problem~\eqref{eq:GeneralPessimisticUnfalsification} is a~\emph{semi-infinite optimization program} (SIP). The SIP was first cast into standard form and then solved using the~\emph{local reduction algorithm},
where the infinite number of constraints is replaced by a finite subset of constraints
that lead to the same optimum~\cite{Blankenship:1976, Hettich:1993, ZagorowskaLocalReduction:2023}.

Based on this numerical example, it can be confirmed that the optimal objective value of the optimistic problem grows
monotonically as more data is added to the data set.
This is not true for the pessimistic problem for which the optimal objective value can both rise and fall when adding new data.
However, it can be confirmed that the solution to the pessimistic problem upper-bounds the optimistic problem, which fits the duality of the optimistic and pessimistic problems. These results are visualized in Figure~\ref{fig:ComparisonR}.

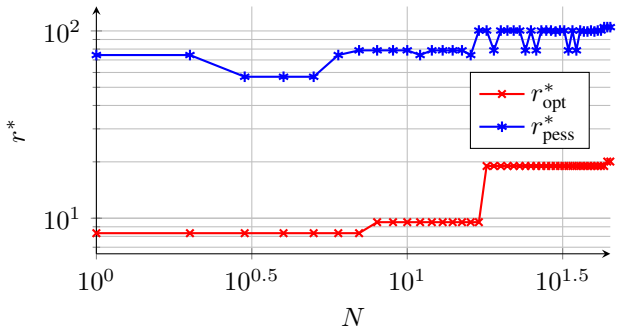
\begin{figure}[t]
    \centering
    \begin{tikzpicture}
        \begin{axis}[
            grid=both,
            xmode=log,
            ymode=log,
            legend style={
                legend cell align=left,
                at={(0.96,0.58)},
                anchor=east,
            },
            scatter/classes={%
                a={mark=x,red},%
                b={mark=asterisk,blue}%
            },
            xlabel=$N$,
            ylabel=$r^*$,
            axis x line=bottom,
            axis y line=left,
            xmax=45,
            width=0.95\columnwidth,
            height=0.55\columnwidth,
            enlarge y limits=0.1,
        ]
            \addplot[
                scatter,
                red,
                thick,
                scatter src=explicit symbolic,
            ] table[
                x=N,
                y=r_opt,
                col sep=comma,
                meta=label,
            ] {figure_data/r_and_d.csv};
            \addlegendentry{ $r_\textnormal{opt}^*$}

            \addplot[
                scatter,
                blue,
                thick,
                scatter src=explicit symbolic,
            ] table[
                x=N,
                y=r_pessi,
                col sep=comma,
                meta=label2,
            ] {figure_data/r_and_d.csv};
            \addlegendentry{ $r_\textnormal{pess}^*$}
        \end{axis}
    \end{tikzpicture}
    \vspace{-10pt}
    \caption{Optimal objective values of optimistic and pessimistic problems for nonlinear system and $\ell_2$-norm ball uncertainty set with increasing data size}
    \label{fig:ComparisonR}
\end{figure}

\section{Conclusions}
In this paper, we proposed a novel framework for data-driven uncertainty model unfalsification to allow for
better justification of assumptions regarding system uncertainties.


We present two possible techniques of choosing an unfalsified uncertainty model: an optimistic and a pessimistic one.
The proposed optimistic uncertainty model unfalsification approach determines the least conservative uncertainty model
that explains the available data.
In contrast, the pessimistic approach consists of computing the set of all permissible uncertainty trajectories given
the data, and determines the uncertainty model that represents all of those uncertainty trajectories. It was
shown that this problem is a semi-infinite program and can be solved using the local reduction algorithm. 

Additionally, theoretical  analysis revealed that the pessimistic uncertainty model unfalsification problem
is feasible under the assumption of bounded uncertainty inferability and bounded measurement noise.
Finally, it was shown that the optimistic uncertainty model represents a~\emph{strict} lower bound, thereby falsifying all
uncertainty models that are less conservative, while the pessimistic uncertainty model can be taken as a loose upper bound.
This duality between the two problems was investigated in more detail.

\section*{Acknowledgments}
The authors would like to thank Dr Andrea Carron and Prof.\ Dr Melanie Zeilinger of the Institute for Dynamic Systems and Control at ETH Z\"urich for enabling this project.

\vfill
\bibliographystyle{IEEEtran}
\bibliography{IEEEabrv,./references.bib}

\end{document}